\newtheorem{theorem}{Theorem}
\newtheorem{lemma}{Lemma}
\newtheorem{definition}{Definition}
\newtheorem{corollary}{Corollary}
\begin{document}

\title{Anomalies of weight-based coherence measure and mixed maximally coherent states}

\author{Yao Yao}
\email{yaoyao@mtrc.ac.cn}
\affiliation{Microsystems and Terahertz Research Center, China Academy of Engineering Physics, Chengdu Sichuan 610200, China}
\affiliation{Institute of Electronic Engineering, China Academy of Engineering Physics, Mianyang Sichuan 621999, China}

\author{Dong Li}
\affiliation{Microsystems and Terahertz Research Center, China Academy of Engineering Physics, Chengdu Sichuan 610200, China}
\affiliation{Institute of Electronic Engineering, China Academy of Engineering Physics, Mianyang Sichuan 621999, China}

\author{C. P. Sun}
\affiliation{Graduate School of China Academy of Engineering Physics, Beijing 100193, China}
\affiliation{Beijing Computational Science Research Center, Beijing 100193, China}

\date{\today}

\begin{abstract}
As an analogy of \textit{best separable approximation} (BSA) in the framework of entanglement theory, here we concentrate on the notion of
\textit{best incoherent approximation}, with application to characterizing and quantifying quantum coherence.
From both analytical and numerical perspectives, we have demonstrated that the weight-based coherence measure displays some \textit{unusual} properties,
in sharp contrast to other popular coherence quantifiers. First, by deriving a closed formula for qubit states,
we have showed the weight-based coherence measure exhibits a rich (geometrical) structure even in this simplest case.
Second, we have identified the existence of \textit{mixed maximally coherent states} (MMCS) with respect to this coherence measure
and discussed the characteristic feature of MMCS in high-dimensional Hilbert spaces.
Especially, we present several important families of MMCS by gaining insights from the numerical simulations.
Moreover, it is pointed out that some considerations in this work can be generalized to general convex resource theories and
a numerical method of improving the computational efficiency for finding the BSA is also discussed.
\end{abstract}

\maketitle
\section{INTRODUCTION}
Recently, the characterization and quantification of quantum coherence remains to be one of the attractive subjects in the field of
quantum information theory, not only for its fundamental implications, but also for practical applications \cite{Streltsov2017,Hu2018}.
It is worth noting that quantum resource theory (QRT) \textit{per se} has also attracted great attention due to its successful application in this topic
\cite{Baumgratz2014,Chitambar2019}. Within the framework of QRT, a plethora of coherence monotones and measures
has been proposed, such as the relative entropy of coherence \cite{Baumgratz2014}, the $l_1$ norm of coherence \cite{Baumgratz2014},
the coherence of formation \cite{Aberg2006,Yuan2015,Winter2016}, the geometric measure of coherence \cite{Streltsov2015} and
the robustness of coherence (ROC) \cite{Napoli2016,Piani2016}.

However, we also notice that the majority of these popular coherence measures are not originally operational defined
with the exception of ROC, which quantifies the minimal ``noise'' or ``mixing'' required to destroy all the coherence
contained in a quantum state \cite{Napoli2016,Piani2016}. In fact, there exists another coherence measure also manifesting
itself as an inherently operational definition, i.e., the weight-based coherence measure, which quantifies the minimal coherence
resource needed to prepare or construct a given state \cite{Bu2018}.

In a specific convex resource theory, the idea of weight-based measure originates from a simple fact that for any given state $\rho$
there always exist convex decompositions such as
\begin{align}
\rho=\lambda\rho_f+(1-\lambda)\rho_r.
\label{decomposition}
\end{align}
Here the \textit{resource} under investigation can be some physical property of quantum states or phenomenon that emerges from the principles of quantum mechanics.
$\rho_f$ belongs to the (convex) set of \textit{free} states, while $\rho_r$ denotes a more resourceful state. When optimizing over
all allowed free states, we will find the maximal weight $\lambda^\star$ according to the decomposition (\ref{decomposition}) and naturally the
weight-based resource measure can be defined as $1-\lambda^\star$.

Actually, the essence of weight-based measure can date back to the Elitzur-Popescu-Rohrlich (EPR2) approach for quantifying nonlocality
of joint probability distributions \cite{Elitzur1992,Barrett2006,Brunner2011}. Meanwhile, the word ``weight'' is also dubbed as ``part'', ``content'', ``cost''
or ``fraction'' in different scenarios and such a line of thought has also been employed to measure quantum entanglement
\cite{Lewenstein1998,Kraus2000,Englert2000,Karnas2001,Wellens2001,Akhtarshenas2003,Akhtarshenas2004,Jafarizadeh2004, Thiang2009,Thiang2010,Quesada2014,Akulin2015,Gabdulin2019},
steering \cite{Skrzypczyk2014,Gallego2015,Das2018}, contextuality \cite{Amselem2012,Grudka2014,Horodecki2015,Abramsky2017},
measurement informativeness \cite{Ducuara2019a} or even arbitrary resources \cite{Ducuara2019b,Uola2019}.

Within the context of convex QRT, although the weight-based resource measure satisfies desirable properties, such as faithfulness, monotonicity and convexity,
it exhibits \textit{unusual} features, in sharp contrast to other popular quantifiers. In this work, we concentrate on the weight-based coherence measure,
or for simplicity we call it \textit{coherence weight}. For instance, the coherence weight is a coarse-grained measure for all pure coherent states,
which means that the coherence weight of any pure coherent state is the same, that is, the maximum value $1$. Indeed, this phenomenon has also been
mentioned for entanglement weight \cite{Cavalcanti2005} and steering weight \cite{Skrzypczyk2014}. Here we explore another two aspects of
coherence weight: (i) by presenting a closed-form formula of coherence weight for single qubit states, we illustrate that the evaluation of coherence weight
depends on the relationship between the absolute values of diagonal and off-diagonal entries (in the incoherent basis). This intriguing fact
is quite remarkable comparing with single-letter formulas of other coherence measures \cite{Baumgratz2014,Winter2016,Streltsov2015,Napoli2016};
(ii) under the framework of QRT, almost all valid coherence measure only assign the maximal value to the maximally coherent \textit{pure} states \cite{Peng2016}.
However, here we propose the notion of \textit{mixed maximally coherent states} (MMCS) with respect to the weight-based coherence measure.
It is worthy noting that such a similar phenomenon, i.e., the existence of mixed maximally steerable states, was demonstrated in quantifying
quantum steering \cite{Skrzypczyk2014}.

The rest of this paper is organized as follows.
In Sec. \ref{sec2}, we review the definition of the weight-based resource measure and discuss its special properties,
especially from the geometric viewpoint.
In Sec. \ref{sec3}, we offer a closed formula of coherence weight for single qubit states, and consequently,
an exhaustive investigation of single-qubit case is put forward by gaining insights from numerical simulations.
In Sec. \ref{sec4}, we propose a definition of mixed maximally coherent states according to the weight-based coherence measure.
Moreover, we provide a detailed numerical analysis of coherence weight for high-dimensional Hilbert spaces and
several important families of MMCS are confirmed.
Discussions and final remarks are given in Sec. \ref{sec5} and several open questions are raised for future research.

\section{Best free approximation and weight-based coherence measure}\label{sec2}
Before focusing on the resource theory of quantum coherence, we begin with the notion of
\textit{best free approximation} (BFA) in a general convex resource theory,
generalising the concept of best separable approximation (BSA) \cite{Regula2018}.
In a $d$-dimensional Hilbert space, let $\mathcal{D}$ and $\mathcal{F}$, respectively,
denote the set of density matrices and the convex set of free states. Hence the BFA of a given state $\rho$
can be defined through an optimization over convex decompositions
\begin{align}
\textrm{BFA}(\rho)&=\min_{\rho_r \in {\mathcal{D}}}\left\{1-\lambda\mid\rho=\lambda\rho_f+(1-\lambda)\rho_r,\rho_f\in\mathcal{F}\right\},\label{BFA}\\
&=\min_{\rho_r \in {\mathcal{D}}}\left\{1-\lambda\mid\rho\succeq \lambda\rho_f,\rho_f\in\mathcal{F}\right\},\label{BFA1}
\end{align}
where the matrix inequality $A\succeq B$ means that $A-B$ is positive semidefinite.
To discuss the properties of BFA, we would like to mention another prominent resource quantifier, that is,
the generalized robustness measure, which is a \textit{dual} quantity to the BFA in some sense \cite{Brandao2005}
\begin{align}
\mathcal{R}(\rho)&=\min_{\tau \in {\mathcal{D}}} \left\{ s\geq 0\ \Big\vert\ \frac{\rho + s\ \tau}{1+s} =: \rho_f \in \mathcal{F}\right\},\label{robustness}\\
&=\min_{\tau \in {\mathcal{D}}}\left\{s\mid\rho\preceq (1+s)\rho_f,\rho_f\in\mathcal{F}\right\}.
\end{align}

Within the framework of QRT, it is generally known that the generalized robustness is a valid resource monotone satisfying
the following axiomatic criteria \cite{Chitambar2019,Regula2018}

\begin{enumerate}[({C}1) ]
\item \textit{Faithfulness}: $\mathcal{R}(\rho)=0$ if and only if $\rho\in\mathcal{F}$.
\item \textit{Convexity}: $\mathcal{R}(\sum_ip_i\rho_i)\leq\sum_ip_i\mathcal{R}(\rho_i)$ for $\rho_i\in\mathcal{D}$, $p_i\geq0$, $\sum_ip_i=1$.
\item \textit{Strong monotonicity}: $\mathcal{R}(\rho)\geq\sum_i\textrm{Tr}[\Theta_i(\rho)]\mathcal{R}[\frac{\Theta_i(\rho)}{\textrm{Tr}[\Theta_i(\rho)]}]$,
where the instrument $\{\Theta_i\}$ is a collection of resource non-generating subchannels, i.e.,
$\Theta_i(\sigma)/\textrm{Tr}[\Theta_i(\rho)]\in\mathcal{F}$ for any $\sigma\in\mathcal{F}$
and $\sum_i\Theta_i$ constitutes a completely positive trace-preserving map.
\end{enumerate}

In fact, the weight-based resource measure is also a sound quantifier fulfilling properties (C1)-(C3):

\begin{lemma}
The BFA($\rho$) is a faithful, convex, and strong monotonic measure in any convex resource theory.
\label{lemma1}
\end{lemma}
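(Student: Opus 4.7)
The plan is to verify (C1)--(C3) in turn, exploiting throughout the structural fact that any admissible decomposition $\rho=\lambda\rho_f+(1-\lambda)\rho_r$ with $\rho_f\in\mathcal{F}$ immediately yields $\textrm{BFA}(\rho)\leq 1-\lambda$; so to upper-bound $\textrm{BFA}$ on any state it suffices to exhibit such a decomposition. Faithfulness (C1) is essentially immediate: if $\rho\in\mathcal{F}$ then the trivial choice $\lambda=1$, $\rho_f=\rho$ gives $\textrm{BFA}(\rho)=0$; conversely, since $\mathcal{F}$ and $\mathcal{D}$ are closed (and compact in finite dimension), the minimum in (\ref{BFA}) is attained, so $\textrm{BFA}(\rho)=0$ forces $\lambda=1$ in some optimum and hence $\rho=\rho_f\in\mathcal{F}$.

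For convexity (C2), I would take an optimal decomposition $\rho_i=\lambda_i\rho_f^{(i)}+(1-\lambda_i)\rho_r^{(i)}$ of each $\rho_i$ with $\textrm{BFA}(\rho_i)=1-\lambda_i$, and set $\mu:=\sum_i p_i\lambda_i$. Regrouping gives
\begin{align*}
\sum_i p_i\rho_i=\mu\,\tilde\rho_f+(1-\mu)\,\tilde\rho_r,\qquad \tilde\rho_f:=\sum_i\frac{p_i\lambda_i}{\mu}\rho_f^{(i)},
\end{align*}
and $\tilde\rho_f\in\mathcal{F}$ by convexity of the free set. This admissible decomposition therefore yields $\textrm{BFA}(\sum_i p_i\rho_i)\leq 1-\mu=\sum_i p_i\,\textrm{BFA}(\rho_i)$.

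The hard part is strong monotonicity (C3), which requires keeping track of how the optimal weight is distributed among the subchannel outcomes. My strategy is to start again from an optimal $\rho=\lambda\rho_f+(1-\lambda)\rho_r$, apply each subchannel $\Theta_i$, and write the induced branch decomposition after renormalization. Letting $q_i:=\textrm{Tr}[\Theta_i(\rho)]$ and $p_i^f:=\textrm{Tr}[\Theta_i(\rho_f)]$, linearity of $\Theta_i$ gives
\begin{align*}
\frac{\Theta_i(\rho)}{q_i}=\frac{\lambda p_i^f}{q_i}\cdot\frac{\Theta_i(\rho_f)}{p_i^f}+\Big(1-\frac{\lambda p_i^f}{q_i}\Big)\sigma_i,
\end{align*}
where $\sigma_i$ is the normalized residual. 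The resource-non-generating property of $\Theta_i$ ensures $\Theta_i(\rho_f)/p_i^f\in\mathcal{F}$, so this is an admissible decomposition of the $i$-th output branch, giving $\textrm{BFA}[\Theta_i(\rho)/q_i]\leq 1-\lambda p_i^f/q_i$. Multiplying by $q_i$, summing over $i$, and invoking trace preservation of $\sum_i\Theta_i$ (which implies $\sum_i q_i=1$ and $\sum_i p_i^f=1$) telescopes to $\sum_i q_i\,\textrm{BFA}[\Theta_i(\rho)/q_i]\leq 1-\lambda=\textrm{BFA}(\rho)$, as required. The only subtlety is outcomes with $q_i=0$ or $p_i^f=0$, which I would handle by discarding trivial branches or adopting the convention $0/0=0$; these contribute nothing to either side of the inequality.
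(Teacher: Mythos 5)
Your proof is correct. The paper itself does not verify (C1)--(C3) directly: it writes $\rho$ as the pseudomixture $[1+\mathcal{R}(\rho)]\rho_f^\star-\mathcal{R}(\rho)\tau^\star$ dual to the generalized robustness, observes the structural parallel with the convex mixture defining the BFA, and then asserts that the robustness arguments of Napoli \emph{et al.} transfer verbatim, the essential ingredients being the linearity of the subchannels and the convexity of $\mathcal{F}$. What you have written is precisely the unpacking of that citation: your (C3) argument --- push the optimal decomposition $\rho=\lambda\rho_f+(1-\lambda)\rho_r$ through each $\Theta_i$, renormalize, use the resource-non-generating property to certify the free branch, and telescope via trace preservation --- is the standard robustness-style monotonicity proof adapted to the weight, so the two approaches rest on identical mechanics; yours simply has the virtue of being self-contained. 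Your handling of the edge cases ($q_i=0$ or $p_i^f=0$) and your appeal to compactness of $\mathcal{F}$ for the converse direction of faithfulness are both points the paper leaves implicit, and both are handled correctly.
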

\begin{proof}
If we denote by $\tau^\star$ and $\rho_f^\star$ the optimal states achieving the minimum in Eq. (\ref{robustness}),
then $\rho$ can be written as a \textit{pseudomixture} \cite{Sanpera1998}
\begin{align}
\rho=[1+\mathcal{R}(\rho)]\rho_f^\star-\mathcal{R}(\rho)\tau^\star.
\end{align}
Comparing with the convex mixture in the definition of Eq. (\ref{BFA}), we are aware of a crucial fact that any procedure or technic used in proving $\mathcal{R}(\rho)$
to satisfy (C1)-(C3) can be applied to the BFA in the same manner, owing to the linearity of quantum (sub)channels and the convexity
of the set of free states (for instance, one can readily mimic the proof presented in Ref. \cite{Napoli2016}).
\end{proof}

\begin{figure}[htbp]
\begin{center}
\includegraphics[width=0.35\textwidth]{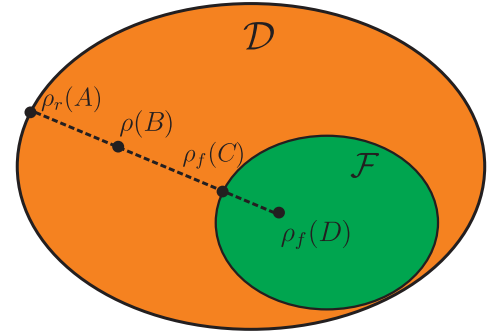}
\end{center}
\caption{(Color online) The geometric interpretation of BFA. The optimal states $\rho_f^\star$ and $\rho_r^\star$ achieving the minimum in Eq. (\ref{BFA})
are on the boundaries of $\mathcal{F}$ and $\mathcal{D}$, respectively. See Appendix \ref{A1} for more details.
}\label{fig1}
\end{figure}

On the other hand, similar to the arguments for the robustness of entanglement \cite{Du2000}, we can give an explicit geometric interpretation
of the BFA:
\begin{lemma}
For any convex resource theory and any (mixed) resourceful state $\rho$, the optimal states $\rho_f^\star$ and $\rho_r^\star$ achieving the minimum in Eq. (\ref{BFA})
are on the boundaries of $\mathcal{F}$ and $\mathcal{D}$, respectively.
\label{lemma2}
\end{lemma}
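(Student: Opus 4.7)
The plan is to argue by contradiction separately for each of the two claims, exhibiting in each case a small perturbation along the line segment through $\rho_f^\star$ and $\rho_r^\star$ that produces an admissible decomposition with strictly larger weight $\lambda' > \lambda^\star$. At the outset I would dispose of the degenerate case $\rho_f^\star = \rho_r^\star$, which would force $\rho = \rho_f^\star \in \mathcal{F}$ and contradict the assumption that $\rho$ is resourceful; thus in what follows $\rho_f^\star \ne \rho_r^\star$ and $\lambda^\star < 1$.

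For the first assertion, I would assume that $\rho_f^\star$ lies in the (relative) interior of $\mathcal{F}$. Because interiority lets us move a little in the direction of any other trace-one Hermitian operator while remaining in $\mathcal{F}$, I would set $\rho_f' = (1-\delta)\rho_f^\star + \delta\,\rho_r^\star$ for $\delta > 0$ small enough that $\rho_f' \in \mathcal{F}$. Solving this relation for $\rho_f^\star$ and inserting it into $\rho = \lambda^\star \rho_f^\star + (1-\lambda^\star)\rho_r^\star$ rearranges into a convex combination of $\rho_f'$ and $\rho_r^\star$ with weights $\lambda' = \lambda^\star/(1-\delta)$ and $1-\lambda' = (1-\lambda^\star-\delta)/(1-\delta)$; these are nonnegative for $\delta \le 1-\lambda^\star$ and $\lambda' > \lambda^\star$, contradicting optimality.

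For the second assertion I would run the dual perturbation: assuming $\rho_r^\star$ lies in the interior of $\mathcal{D}$, set $\rho_r' = (1+\delta)\rho_r^\star - \delta\,\rho_f^\star$. Interiority of $\rho_r^\star$ in $\mathcal{D}$ guarantees $\rho_r' \in \mathcal{D}$ for sufficiently small $\delta > 0$. A parallel algebraic rearrangement of the optimal decomposition then yields $\rho = \lambda' \rho_f^\star + (1-\lambda')\rho_r'$ with $\lambda' = (\lambda^\star + \delta)/(1+\delta)$, so that $\lambda' - \lambda^\star = \delta(1-\lambda^\star)/(1+\delta) > 0$, again contradicting optimality.

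The main obstacle is not the algebra, which is elementary, but the geometric bookkeeping: one must pick the perturbation direction carefully so that the resulting state remains in the correct convex set while simultaneously yielding a strictly larger $\lambda$. The guiding intuition, essentially the picture of Fig.~\ref{fig1}, is that increasing $\lambda$ amounts to extending the chord $\rho_f\text{--}\rho_r$ through $\rho$ as far as possible in both directions, which is impossible once either endpoint has already reached the boundary of its admissible set. A minor technicality is that $\mathcal{F}$ is typically not full-dimensional in the space of density matrices, so "interior" should be interpreted in the relative (affine) sense appropriate to the specific resource theory; once this is fixed, the arguments above apply without change.
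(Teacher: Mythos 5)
Your perturbation argument is, in substance, the paper's own proof (Appendix \ref{A1}) written algebraically rather than metrically: the paper parametrizes the same chord through $\rho_f^\star$, $\rho$, $\rho_r^\star$ by Hilbert--Schmidt lengths and makes exactly your two moves (slide $\rho_f$ toward $\rho$ along the chord, slide $\rho_r$ away from it). For the situation actually depicted in Fig.~\ref{fig1} --- $\mathcal{F}$ with nonempty interior in the ambient space, as for separable states --- your proof is correct and equivalent to the paper's.

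The one point that does not survive scrutiny is your closing ``minor technicality.'' When $\mathcal{F}$ is not full-dimensional, passing to the relative interior does \emph{not} rescue the first step: your perturbation direction is $\rho_r^\star-\rho_f^\star=(\rho-\rho_f^\star)/(1-\lambda^\star)$, and since a resourceful $\rho$ generally lies outside the affine hull of $\mathcal{F}$ (every coherent state lies outside the affine hull of the diagonal matrices), this direction is transverse to $\mathrm{aff}(\mathcal{F})$; hence $\rho_f'=(1-\delta)\rho_f^\star+\delta\,\rho_r^\star\notin\mathcal{F}$ for \emph{every} $\delta>0$, however relatively interior $\rho_f^\star$ may be. This is not a repairable slip in your write-up so much as a sign that the relative-boundary version of the first assertion is actually false in the coherence theory: for the qubit state with $\rho_{00}=\rho_{11}=1/2$ and $\rho_{01}=1/4$, the closed qubit formula gives $C_w=1/2$ and the optimal incoherent part is $\rho_f^\star=\openone/2$, which sits in the relative interior of the incoherent states (indeed in the interior of $\mathcal{D}$). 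So the claim about $\rho_f^\star$ carries content only when $\mathcal{F}$ is full-dimensional (where your argument works, and which is the implicit setting of the paper's own proof as well), and is vacuous otherwise; the assertion that the relative reading goes through ``without change'' should be deleted. The second half of your proof, concerning $\rho_r^\star$ and the boundary of $\mathcal{D}$, is unaffected, since $\mathcal{D}$ is always full-dimensional in the trace-one Hermitian matrices.
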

\begin{proof}
See Appendix \ref{A1}.
\end{proof}

Moreover, note that the BFA is not extensive, i.e., does not scale with the dimension of the state space.
Therefore, when we compare it with other popular resource measures, a proper normalization is necessary:
\begin{lemma}
For any convex resource theory and any other convex resource monotone $\mathcal{X}(\rho)$, the normalized version of $\mathcal{X}(\rho)$ is upper bounded by $BFA(\rho)$, i.e.,
$\overline{\mathcal{X}}(\rho)\leq BFA(\rho)$, where $\overline{\mathcal{X}}(\rho)=\mathcal{X}(\rho)/\mathcal{X}_d$ and $\mathcal{X}_d$ is the maximal value associated with
$d$-dimensional Hilbert space.
\label{lemma3}
\end{lemma}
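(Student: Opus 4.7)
The plan is to exploit the optimal convex decomposition that defines $\text{BFA}(\rho)$ and combine it with two standard properties of $\mathcal{X}$: convexity (assumed in the statement) and vanishing on free states (implicit in the term ``resource monotone''). First, I would invoke Lemma~\ref{lemma2} (or simply the definition in Eq.~(\ref{BFA})) to write
\begin{align}
\rho = (1-\text{BFA}(\rho))\,\rho_f^\star + \text{BFA}(\rho)\,\rho_r^\star,
\end{align}
where $\rho_f^\star\in\mathcal{F}$ and $\rho_r^\star\in\mathcal{D}$ are the optimizers.

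Second, I would apply convexity of $\mathcal{X}$ to this decomposition, obtaining
\begin{align}
\mathcal{X}(\rho) \leq (1-\text{BFA}(\rho))\,\mathcal{X}(\rho_f^\star) + \text{BFA}(\rho)\,\mathcal{X}(\rho_r^\star).
\end{align}
The first term on the right-hand side vanishes because $\rho_f^\star$ is free and every resource monotone is zero on $\mathcal{F}$. For the second term, I would use the fact that $\mathcal{X}_d$ is by definition the supremum of $\mathcal{X}$ over the $d$-dimensional state space, so $\mathcal{X}(\rho_r^\star)\leq \mathcal{X}_d$. Putting these together gives $\mathcal{X}(\rho)\leq \text{BFA}(\rho)\,\mathcal{X}_d$, and dividing by $\mathcal{X}_d$ yields the claim $\overline{\mathcal{X}}(\rho)\leq \text{BFA}(\rho)$.

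There is no serious obstacle here: the argument is essentially a one-line application of convexity to the BFA decomposition, and it mirrors the ``upper-bound by BSA'' arguments familiar from entanglement theory. The only mild subtlety I would flag in the write-up is that the conclusion uses faithfulness of $\mathcal{X}$ on free states (only one direction, $\rho\in\mathcal{F}\Rightarrow \mathcal{X}(\rho)=0$), which is a minimal requirement tacitly built into any reasonable notion of ``resource monotone''; strong monotonicity (C3) is not needed. One should also note that the bound is tight on pure resourceful states whenever $\text{BFA}$ collapses to $1$ there, which is precisely the coarse-graining phenomenon the paper highlights for coherence weight.
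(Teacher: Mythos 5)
Your argument is identical to the paper's proof: both apply convexity of $\mathcal{X}$ to the optimal decomposition $\rho=[1-\mathrm{BFA}(\rho)]\rho_f^\star+\mathrm{BFA}(\rho)\rho_r^\star$, kill the first term via $\mathcal{X}(\rho_f^\star)=0$, bound the second by $\mathcal{X}_d$, and normalize. The proposal is correct and adds only the (accurate) remark that faithfulness is needed just in the direction $\rho\in\mathcal{F}\Rightarrow\mathcal{X}(\rho)=0$.
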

\begin{proof}
If we denote by $\rho_f^\star$ and $\rho_r^\star$ the optimal states achieving the minimum in Eq. (\ref{BFA}),
that is, $\rho=[1-BFA(\rho)]\rho_f^\star+BFA(\rho)\rho_r^\star$, then from the convexity of $\mathcal{X}(\rho)$ we have
\begin{align}
\mathcal{X}(\rho)&\leq[1-BFA(\rho)]\mathcal{X}(\rho_f^\star)+BFA(\rho)\mathcal{X}(\rho_r^\star)\nonumber\\
&=BFA(\rho)\mathcal{X}(\rho_r^\star)\leq BFA(\rho)\mathcal{X}_d.
\end{align}
After the normalization, the proof is complete.
\end{proof}

Apart form the above general results, we should take a closer look at the resource theory of quantum coherence \cite{Baumgratz2014}.
In this framework, the free states are those diagonal in a prefixed orthogonal basis, i.e., the incoherent basis $\{|i\rangle\}_{i=0}^{d-1}$.
The free operations are usually chosen to the so-called incoherent operations, which admit a Kraus decomposition
$\Theta(\rho)=\sum_iK_i\rho K_i^\dagger$ such that every Kraus operator is required to fulfill
$K_i\rho K_i^\dagger/\textrm{Tr}(K_i\rho K_i^\dagger)\in\mathcal{F}$ for all $\rho\in\mathcal{F}$ \cite{Baumgratz2014,Yao2015}.
However, from Lemma \ref{lemma1}, it is evident that the coherent weight (i.e., best incoherent approximation) is a MIO-monotone, where the abbreviation MIO
stands for ``maximal incoherent operations'', which is recognized as the largest class of incoherent operations
and is equivalent to the definition of resource non-generating channels in Lemma \ref{lemma1} \cite{Aberg2006,Chitambar2016}.

Moreover, the coherence weight can be recast into a simple semidefinite program (SDP) \cite{Bu2018}
\begin{align}
C_w(\rho)=\max\left\{\textrm{Tr}(\rho\omega)\ | \ \Delta\omega\preceq0,\omega\preceq\openone\right\}.\label{SDP}
\end{align}
Nevertheless, from the Eq. (\ref{BFA1}) and the definition of incoherent states, $C_w(\rho)$ can be
directly expressed as
\begin{align}
C_w(\rho)=\min\left\{1-\sum_i\lambda_i\ \Big\vert\ \rho-\sum_i\lambda_i|i\rangle\langle i|\succeq0\right\}.\label{SDP1}
\end{align}
Such an alternative SDP has an advantage over Eq. (\ref{SDP}) in that it can indicate the closest incoherent with respect to $\rho$.
More precisely, we can make use of the {\sc cvx} package to evaluate the coherence weight and simultaneously
obtain the accurate values of $\lambda_i$ \cite{Grant2020}.

Finally, we also notice that for qubit states the $l_1$ norm of coherence, the robustness of coherence,
the coherence of formation, the geometric measure of coherence, etc. are all monotonic functions of the absolute value
of the off-diagonal element $|\rho_{01}|=|\langle 0|\rho|1\rangle|$. In the next section, we show
that it is not the case for coherent weight, which depends on the specific form of the given qubit state
and thus exhibits a richer structure.

\section{$C_w$ for single qubit states}\label{sec3}
In this section, we present the analytical formula of coherence weight for single qubit states.
Before we begin, we recall some important definitions and results initially associated with the notion of BSA
(or so called Lewenstein-Sanpera decomposition) \cite{Lewenstein1998,Karnas2001}.

\begin{definition}
A non-negative parameter $\Lambda$ is called maximal with respect to a density matrix $\rho$ and the projection
operator $P=|\psi\rangle\langle\psi|$ iff $\rho-\Lambda P\succeq0$, and for every $\epsilon\geq0$,
the matrix $\rho-(\Lambda+\epsilon)P$ is not positive definite.
\label{definition1}
\end{definition}
\begin{definition}
A pair of non-negative parameters $(\Lambda_1,\Lambda_2)$ is called maximal with respect to $\rho$ and a pair of projection operators
$P_1=|\psi_1\rangle\langle\psi_1|$, $P_2=|\psi_2\rangle\langle\psi_2|$ iff $\rho-\Lambda_1P_1-\Lambda_2P_2\succeq0$,
$\Lambda_1$ is maximal with respect to $\rho-\Lambda_2P_2$, $\Lambda_2$ is maximal with respect to $\rho-\Lambda_1P_1$,
and the sum $\Lambda_1+\Lambda_2$ is maximal.
\end{definition}
\begin{lemma}
A pair $(\Lambda_1,\Lambda_2)$ is maximal with respect to $\rho$ and a pair of projectors $(P_1,P_2)$ iff

(a) if $|\psi_1\rangle,|\psi_2\rangle\notin r(\rho)$ [where $r(\rho)$ denotes the range of $\rho$] then $\Lambda_1=\Lambda_2=0$.

(b) if $|\psi_1\rangle\notin r(\rho)$ while $|\psi_2\rangle\in r(\rho)$ then $\Lambda_1=0$, $\Lambda_2=\langle\psi_2|\rho^{-1}|\psi_2\rangle^{-1}$.

(c) if $|\psi_1\rangle,|\psi_2\rangle\in r(\rho)$ and $\langle\psi_1|\rho^{-1}|\psi_2\rangle=0$ then $\Lambda_i=\langle\psi_i|\rho^{-1}|\psi_i\rangle^{-1}$, $i=1,2$.

(d) if $|\psi_1\rangle,|\psi_2\rangle\in r(\rho)$ and $\langle\psi_1|\rho^{-1}|\psi_1\rangle, \langle\psi_2|\rho^{-1}|\psi_2\rangle\geq|\langle\psi_1|\rho^{-1}|\psi_2\rangle|\neq0$,
then
\begin{align}
\Lambda_1=\left(\langle\psi_2|\rho^{-1}|\psi_2\rangle-|\langle\psi_1|\rho^{-1}|\psi_2\rangle|\right)/D,\nonumber\\
\Lambda_2=\left(\langle\psi_1|\rho^{-1}|\psi_1\rangle-|\langle\psi_1|\rho^{-1}|\psi_2\rangle|\right)/D,\nonumber
\end{align}
where $D=\langle\psi_1|\rho^{-1}|\psi_1\rangle\langle\psi_2|\rho^{-1}|\psi_2\rangle-|\langle\psi_1|\rho^{-1}|\psi_2\rangle|^2$.

(e) if $|\psi_1\rangle,|\psi_2\rangle\in r(\rho)$ and $\langle\psi_1|\rho^{-1}|\psi_1\rangle\geq|\langle\psi_1|\rho^{-1}|\psi_2\rangle|\geq\langle\psi_2|\rho^{-1}|\psi_2\rangle$,
then $\Lambda_1=0$, $\Lambda_2=\langle\psi_2|\rho^{-1}|\psi_2\rangle^{-1}$.
\label{lemma4}
\end{lemma}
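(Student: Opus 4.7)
Proof plan: My approach would be to reduce the pair-maximality problem to a two-variable constrained optimisation inside $r(\rho)$, and then enumerate the KKT regimes.

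First, I would invoke the single-projector version of Definition \ref{definition1}, whose resolution is classical: the maximal $\Lambda$ with respect to $\rho$ and $P=|\psi\rangle\langle\psi|$ is $0$ when $|\psi\rangle\notin r(\rho)$ and $1/\langle\psi|\rho^{-1}|\psi\rangle$ otherwise, with $\rho^{-1}$ read as the Moore--Penrose pseudo-inverse on $r(\rho)$. This immediately disposes of case (a): any positive $\Lambda_i$ would force $|\psi_i\rangle\in r(\rho-\Lambda_{3-i}P_{3-i})\subseteq r(\rho)$, contradicting the hypothesis. Case (b) likewise forces $\Lambda_1=0$ and reduces to a single-projector problem.

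For cases (c)--(e), with both $|\psi_1\rangle,|\psi_2\rangle\in r(\rho)$, I would restrict attention to $r(\rho)$ and conjugate the constraint $\rho-\Lambda_1 P_1-\Lambda_2 P_2\succeq 0$ by $\rho^{-1/2}$, producing the equivalent condition
\begin{equation}
\openone-\Lambda_1|\phi_1\rangle\langle\phi_1|-\Lambda_2|\phi_2\rangle\langle\phi_2|\succeq 0,\quad |\phi_i\rangle:=\rho^{-1/2}|\psi_i\rangle.
\end{equation}
Writing $a:=\langle\psi_1|\rho^{-1}|\psi_1\rangle$, $b:=\langle\psi_2|\rho^{-1}|\psi_2\rangle$, $c:=\langle\psi_1|\rho^{-1}|\psi_2\rangle$, and $D:=ab-|c|^2$, a direct two-dimensional eigenvalue computation on $\mathrm{span}\{|\phi_1\rangle,|\phi_2\rangle\}$ collapses this to the scalar inequality
\begin{equation}
a\Lambda_1+b\Lambda_2-D\,\Lambda_1\Lambda_2\leq 1,\qquad \Lambda_1,\Lambda_2\geq 0,
\end{equation}
whose sublevel set is a spectrahedron and hence convex.

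Maximising the linear functional $\Lambda_1+\Lambda_2$ over this convex region is then a finite enumeration of KKT regimes. In case (c) with $c=0$ the constraint decouples into $a\Lambda_1\leq 1$ and $b\Lambda_2\leq 1$, yielding the stated answer. In case (d), the Lagrange conditions at an interior optimum read $a-D\Lambda_2=b-D\Lambda_1$; combined with equality in the constraint, they yield the claimed closed forms, and the hypothesis $a,b\geq|c|$ is exactly what keeps both $\Lambda_i\geq 0$. In case (e), the analogous interior candidate has $\Lambda_1=(b-|c|)/D\leq 0$, so the optimum migrates to the boundary $\Lambda_1=0$, where single-projector maximality gives $\Lambda_2=1/b$.

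The main obstacle will be the threshold and degenerate sub-cases. One must verify that (i) the interior KKT optimum, when it falls in the non-negative quadrant, genuinely realises the pair-maximality demanded by Definition 2 rather than the weaker sum-maximality alone, which follows from strict convexity of the feasible boundary away from the degenerate locus; (ii) the situation $D=0$, in which $|\phi_1\rangle$ and $|\phi_2\rangle$ become collinear by saturation of Cauchy--Schwarz, reduces cleanly to a single-projector problem; and (iii) the stated case-split covers every configuration of the inequalities between $a,b,|c|$ up to the $1\leftrightarrow 2$ symmetry implicit in the lemma.
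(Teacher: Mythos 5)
The paper does not actually prove Lemma \ref{lemma4}: it imports the statement from the Lewenstein--Sanpera and Karnas--Lewenstein papers and points to an SDP-based rederivation, so your proposal has to be judged on its own merits rather than against an in-text argument. On those merits it is essentially sound, and in fact it closely parallels the SDP/KKT route of the cited alternative derivation: the conjugation by $\rho^{-1/2}$, the reduction to a $2\times 2$ problem in the Gram data $a$, $b$, $c$, and the enumeration of which Lagrange regime is active is exactly the right skeleton, and your closed forms check out (e.g.\ in case (d) the stationarity condition $a-D\Lambda_2=b-D\Lambda_1$ together with saturation of the constraint reproduces $\Lambda_1=(b-|c|)/D$, $\Lambda_2=(a-|c|)/D$, and the hypothesis $a,b\geq|c|$ is precisely nonnegativity of the candidate).

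One step is stated incorrectly, though it is repairable. The positivity condition is \emph{not} equivalent to the single scalar inequality $a\Lambda_1+b\Lambda_2-D\Lambda_1\Lambda_2\leq 1$: that inequality says $(1-\mu_1)(1-\mu_2)\geq 0$ for the two nonzero eigenvalues $\mu_{1,2}$ of $\Lambda_1|\phi_1\rangle\langle\phi_1|+\Lambda_2|\phi_2\rangle\langle\phi_2|$, and its sublevel set contains a spurious branch on which both eigenvalues exceed $1$ (take $a=b=1$, $c=0$, $\Lambda_1=\Lambda_2=3$). Consequently that sublevel set is not convex and is not the spectrahedron; the genuine feasible region is the connected component of the origin, obtained by adding the trace condition $a\Lambda_1+b\Lambda_2\leq 2$ or, equivalently, by working directly with $\openone-M\succeq 0$, which is convex because it is an intersection of half-planes in $(\Lambda_1,\Lambda_2)$. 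Once you restrict to that branch the rest of your argument goes through, including the point you flag yourself: individual maximality of each $\Lambda_i$ given the other holds automatically at every point of the curved boundary (for fixed $\Lambda_2$ the feasible $\Lambda_1$ form an interval starting at $0$, so lying on the boundary is the same as being coordinatewise maximal), and the sum-maximality then selects the unique point claimed in each case. The degenerate locus $D=0$ under the hypotheses of case (d) forces $a=b=|c|$ and sits on the seam between (d) and (e), which is worth a sentence but causes no trouble.
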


It is worth emphasizing that small typo mistakes in Ref. \cite{Lewenstein1998,Karnas2001} have been corrected here. Besides the original proof of Lemma \ref{lemma4} in Ref. \cite{Lewenstein1998},
we refer the readers to an alternative derivation via SDP method \cite{Jafarizadeh2005}. A key observation in proving Lemma \ref{lemma4} lies in that
although within the context of BSA the projectors are all pure product states such as $|\psi\rangle=|e\rangle\otimes|f\rangle$, actually this point has
not been taken into account in the proof. Obviously, the above lemma can also been exploited in single-partite system through Eq. (\ref{SDP1}),
where for qubit states the projectors are pure incoherent basis states $\{|0\rangle\langle0|,|1\rangle\langle1|\}$.
Now we have the toolkit to present the first main result of this work.

\begin{theorem}
For an arbitrary qubit state $\rho$, the coherence weight can be evaluated as
\begin{align}
C_w(\rho)=
\left\{\begin{array}{cc}
2|\rho_{01}|, &   \rho_{00},\rho_{11}\geq|\rho_{01}|\\
1-\frac{\det\rho}{\min\{\rho_{00},\rho_{11}\}}, &   otherwise
\end{array}\right.
\end{align}
where $\rho_{ij}=\langle i|\rho|j\rangle$ are elements of $\rho$ in the incoherent basis.
\end{theorem}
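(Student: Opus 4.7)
The plan is to reduce the SDP (\ref{SDP1}) to the setting of Lemma \ref{lemma4} by taking the two projectors to be $P_1=|0\rangle\langle 0|$ and $P_2=|1\rangle\langle 1|$. Since maximizing $\lambda_1+\lambda_2$ under the constraint $\rho-\lambda_1|0\rangle\langle 0|-\lambda_2|1\rangle\langle 1|\succeq 0$ is \emph{precisely} the problem of determining the maximal pair $(\Lambda_1,\Lambda_2)$, we have $C_w(\rho)=1-\Lambda_1-\Lambda_2$, and it suffices to evaluate the alternatives supplied by Lemma \ref{lemma4}.

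First, I would treat the generic full-rank case $\det\rho>0$, so that both $|0\rangle,|1\rangle\in r(\rho)$ and only cases (c)--(e) can apply. A direct qubit inversion yields $\langle 0|\rho^{-1}|0\rangle=\rho_{11}/\det\rho$, $\langle 1|\rho^{-1}|1\rangle=\rho_{00}/\det\rho$, and $|\langle 0|\rho^{-1}|1\rangle|=|\rho_{01}|/\det\rho$. Simultaneously, positivity $\rho\succeq 0$ enforces $\rho_{00}\rho_{11}\geq|\rho_{01}|^2$, which rules out the scenario in which \emph{both} diagonal entries fall below $|\rho_{01}|$. Thus cases (c), (d), and (e) (the latter taken up to the symmetry $\psi_1\leftrightarrow\psi_2$) exhaustively partition the parameter space of full-rank qubit states.

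Matching each alternative to a branch of the claimed formula is then almost immediate. Case (c) requires $\rho_{01}=0$ and gives $C_w=0$, consistent with $2|\rho_{01}|$. Case (d) applies exactly when $\rho_{00},\rho_{11}\geq|\rho_{01}|$; here the denominator simplifies dramatically to $D=(\rho_{00}\rho_{11}-|\rho_{01}|^2)/(\det\rho)^2=1/\det\rho$, after which $\Lambda_1+\Lambda_2$ telescopes to $\rho_{00}+\rho_{11}-2|\rho_{01}|=1-2|\rho_{01}|$, yielding the first branch. Case (e) (or its mirror) applies exactly when precisely one of $\rho_{00},\rho_{11}$ drops below $|\rho_{01}|$; only the weight attached to the smaller diagonal entry survives, with value $\det\rho/\min\{\rho_{00},\rho_{11}\}$, which reproduces the second branch.

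Finally, the rank-deficient case ($\det\rho=0$, corresponding to pure qubit states) would be handled by cases (a)--(b): a coherent pure state has neither $|0\rangle$ nor $|1\rangle$ in its range, so case (a) forces $\Lambda_1=\Lambda_2=0$ and $C_w=1$, in agreement with both branches on the boundary $\rho_{00}\rho_{11}=|\rho_{01}|^2$. The main obstacle is not any individual computation---each reduces to a one-line substitution in Lemma \ref{lemma4}---but the careful case-splitting required to verify that the two-branch formula is a consistent and exhaustive regrouping of the five subcases; the positivity inequality $\rho_{00}\rho_{11}\geq|\rho_{01}|^2$ is the decisive sanity check that prevents either a missing case or an overlap between branches.
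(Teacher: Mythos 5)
Your proposal is correct and follows essentially the same route as the paper: reduce the SDP of Eq.~(\ref{SDP1}) to the maximal-pair problem of Lemma~\ref{lemma4} with $P_1=|0\rangle\langle 0|$, $P_2=|1\rangle\langle 1|$, compute the matrix elements of $\rho^{-1}$ (the paper does this via the Bloch form $\rho_u^{-1}=\rho_{-u}/\det\rho$, you by direct inversion), and read off cases (d) and (e). Your treatment is in fact slightly more thorough than the paper's on the boundary and degenerate cases --- the positivity argument ruling out $\rho_{00},\rho_{11}<|\rho_{01}|$ simultaneously, and the explicit check of cases (a)--(c) for pure and incoherent states, which the paper dispatches with a one-line remark; the only quibble is that in case (e) the surviving weight is attached to the projector whose diagonal entry is the \emph{larger} one (since $\rho^{-1}$ swaps the diagonal entries), though the value $\det\rho/\min\{\rho_{00},\rho_{11}\}$ you give is correct.
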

\begin{proof}
For convenience, we also adopt the Bloch representation of $\rho$, that is
\begin{align}
\rho_{u}=\frac{1}{2}(1+\vec{u}\cdot\vec{\sigma}),\label{Bloch}
\end{align}
where $\vec{u}=(u_1,u_2,u_3)$ denotes the Bloch vector and $\vec{\sigma}=(\sigma_1,\sigma_2,\sigma_3)$ are
standard Pauli matrices. Using Eq. (\ref{Bloch}), one can easily verify that the inverse matrix of $\rho$ can be expressed as
\begin{align}
\rho_{u}^{-1}=\frac{4}{1-|\vec{u}|^2}\rho_{-u}=\frac{1}{\det\rho}\rho_{-u}.
\end{align}
Therefore, up to the factor $1/\det\rho$, the elements of $\rho_{u}^{-1}$ in the incoherent basis
are just that of $\rho_{-u}$. Meanwhile, note that $\langle i|\rho_{-u}|i\rangle=\rho_{jj}$ for $i\neq j=0,1$.

To utilize Lemma \ref{lemma4}, we first assume $\rho_{00},\rho_{11}\geq|\rho_{01}|$.
In this case, we can directly employ item (d) of Lemma \ref{lemma4}
\begin{align}
C_w(\rho)&=1-\frac{\langle0|\rho^{-1}|0\rangle+\langle1|\rho^{-1}|1\rangle-2|\langle0|\rho^{-1}|1\rangle|}
{\langle0|\rho^{-1}|0\rangle\langle1|\rho^{-1}|1\rangle-|\langle0|\rho^{-1}|1\rangle|^2}\nonumber\\
&=1-\frac{\rho_{11}+\rho_{00}-2|\rho_{01}|}{\rho_{11}\rho_{00}-|\rho_{01}|^2}\det\rho\nonumber\\
&=2|\rho_{01}|,
\end{align}
where the relation $\rho_{00}\rho_{11}-|\rho_{01}|^2=\det\rho$ is used. On the other hand, if
the value of $|\rho_{01}|$ lies between $\rho_{00}$ and $\rho_{11}$, item (e) of Lemma \ref{lemma4}
can be applied
\begin{align}
C_w(\rho)=1-\frac{\det\rho}{\min\{\rho_{00},\rho_{11}\}}.
\end{align}
Finally, notice that the combined formula is also valid for pure or incoherent qubit states.
\end{proof}

To gain a deeper insight into $C_w(\rho)$, we perform a numerical simulation for $10^5$ randomly generated qubit states (see Fig. \ref{fig2}).
The following two corollaries encapsulate the intriguing observations and analytical proofs.

\begin{corollary}
For any qubit state $\rho$, there exists a trade-off relationship between the coherence weight $C_w(\rho)$
and the mixedness $M(\rho)$
\begin{align}
C_w(\rho)^2+M(\rho)\leq1,\label{mixedness}
\end{align}
where the mixedness $M(\rho)=2(1-Tr\rho^2)$ is characterized by the linear entropy of $\rho$ \cite{Singh2015}.
\label{corollary1}
\end{corollary}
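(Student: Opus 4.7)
The plan is to parameterize the qubit in its Bloch form $\rho = \tfrac{1}{2}(\openone+\vec{u}\cdot\vec{\sigma})$. Then $\mathrm{Tr}\,\rho^2 = \tfrac{1}{2}(1+|\vec{u}|^2)$ immediately yields $M(\rho) = 1 - |\vec{u}|^2$, so inequality (\ref{mixedness}) is equivalent to $C_w(\rho)^2 \leq u_1^2 + u_2^2 + u_3^2$. Using the closed form for $C_w(\rho)$ just established, I would verify this bound separately in each of the two regimes demarcated by whether $\min\{\rho_{00},\rho_{11}\}$ dominates $|\rho_{01}|$.

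In the diagonal-dominant regime, $C_w(\rho)^2 = 4|\rho_{01}|^2 = u_1^2+u_2^2$, so the desired bound holds with slack exactly $u_3^2$; this case is immediate. For the off-diagonal-dominant regime I would, without loss of generality, take $\rho_{00}\leq\rho_{11}$ (i.e.\ $u_3\leq 0$) and introduce $s = -u_3 \in [0,1]$ and $r = \sqrt{u_1^2+u_2^2}$, so the regime hypothesis becomes $r > 1-s$. Substituting $\det\rho = (1-r^2-s^2)/4$ and $\min\{\rho_{00},\rho_{11}\} = (1-s)/2$ into the closed form yields a rational expression for $C_w(\rho)$ whose square can then be compared with $r^2+s^2$.

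The key algebraic step is to notice that, after clearing denominators, the target inequality rearranges into a perfect-square form, namely $[(1-s)^2-r^2]^2 \leq 4s^2(1-s)^2$, equivalently $|(1-s)^2 - r^2|\leq 2s(1-s)$. Because we are in the regime $r > 1-s$, the absolute value collapses to $r^2-(1-s)^2$, and the inequality reduces cleanly to $r^2 + s^2 \leq 1$, which is precisely the Bloch ball constraint $|\vec{u}|\leq 1$.

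The only real obstacle is spotting this perfect-square reorganization in the second regime; once that is in hand, the Bloch constraint finishes the argument without further effort. An attractive byproduct is that the saturation condition becomes transparent: in Case 2 equality forces $|\vec{u}|=1$ (non-incoherent pure states), and in Case 1 it forces $u_3=0$ (equatorial states), so the trade-off (\ref{mixedness}) is tight on a geometrically clean subset of the Bloch ball.
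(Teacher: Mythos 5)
Your proposal is correct and follows essentially the same route as the paper: both pass to the Bloch representation, observe that the claim is equivalent to $C_w(\rho)\leq|\vec{u}|$, and split into the same two cases dictated by the closed formula of Theorem~1, with the diagonal-dominant case being immediate. The only difference is in the final algebra of the second case --- the paper factors out $1-|\vec{u}|$ and reduces to $1-2u_3\leq|\vec{u}|$, which follows from the case hypothesis, whereas you complete a square and reduce to the Bloch-ball constraint $r^2+s^2\leq1$; both finishes are one line, and yours has the minor bonus of making the saturation set transparent.
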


\begin{figure}[htbp]
\begin{center}
\includegraphics[width=0.48\textwidth]{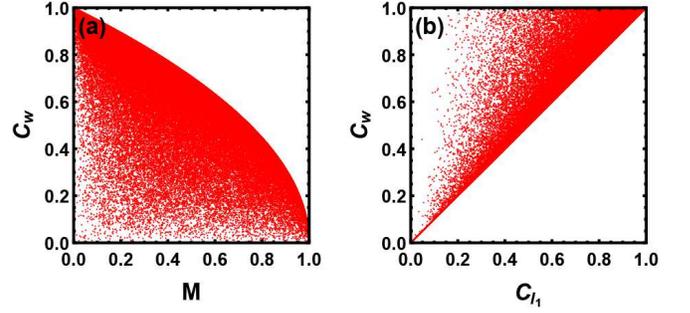}
\end{center}
\caption{(Color online) The numerical simulation (e.g., $10^5$ randomly generated qubit states) shows that (a)
there exists a trade-off relationship between $C_w(\rho)$ and the mixedness $M(\rho)$; (b) $C_w(\rho)$ is always
larger than or equal to the $l_1$ norm of coherence $C_{l_1}(\rho)$ for qubit states.
}\label{fig2}
\end{figure}

\begin{proof}
In the Bloch representation, this inequality is equivalent to $C_w(\rho)\leq|\vec{u}|$, i.e., $C_w(\rho)$
should be less than or equal to the length of the Bloch vector of $\rho$. If $\rho_{00},\rho_{11}\geq|\rho_{01}|$,
this inequality is true since in this case $C_w(\rho)=2|\rho_{01}|=(u_1^2+u_2^2)^{1/2}\leq|\vec{u}|$.
Therefore, without loss of generality, we can assume $\rho_{00}\geq|\rho_{01}|\geq\rho_{11}$, that is, $1+u_3\geq(u_1^2+u_2^2)^{1/2}\geq1-u_3$,
and the following equivalence relations hold
\begin{align}
C_w(\rho)&=1-\frac{2}{1-u_3}\frac{1-|\vec{u}|^2}{4}\leq|\vec{u}|\nonumber\\
&\Leftrightarrow 1-|\vec{u}|\leq\frac{1-|\vec{u}|^2}{2(1-u_3)}\nonumber\\
&\Leftrightarrow 1-2u_3\leq|\vec{u}|\nonumber
\end{align}
The last inequality is valid since $1-2u_3\leq1-u_3\leq(u_1^2+u_2^2)^{1/2}\leq|\vec{u}|$ by assumption.
\end{proof}

In fact, the inequality Eq. (\ref{mixedness}) is stronger than the inequality $C_{l_1}(\rho)^2+M(\rho)\leq1$ proved
in Ref. \cite{Singh2015} due to the fact $C_w(\rho)\geq C_{l_1}(\rho)$ for qubit states (see Lemma \ref{lemma3}).
In the following corollary, a more detailed proof is given starting from the analytical formula of $C_w(\rho)$ and as a byproduct
we can show that the volume of the states ``on the line'' $C_w(\rho)=C_{l_1}(\rho)=2|\rho_{01}|$ is precisely
equal to that ``above this line'' (see Fig. \ref{fig2} (b)).

\begin{corollary}
For any qubit state $\rho$, $C_w(\rho)\geq C_{l_1}(\rho)$ holds. Further we have
\begin{align}
\frac{\nu\left[C_w(\rho)>C_{l_1}(\rho)\right]}{\nu\left[C_w(\rho)=C_{l_1}(\rho)\right]}=1,
\end{align}
where $\nu[\bullet]$ is the volume of states satisfying the corresponding condition.
\label{corollary2}
\end{corollary}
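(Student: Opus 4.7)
The plan is to split into the two regimes of the closed-form formula from Theorem~1 and compare directly against $C_{l_1}(\rho)=2|\rho_{01}|$, then compute the relevant volumes in the Bloch parameterization.

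First, in the regime $\rho_{00},\rho_{11}\geq|\rho_{01}|$ the theorem yields $C_w(\rho)=2|\rho_{01}|=C_{l_1}(\rho)$ identically, so equality holds on all of this region. Second, in the complementary regime I may assume without loss of generality that $\rho_{11}=\min\{\rho_{00},\rho_{11}\}<|\rho_{01}|$. Using $\rho_{00}+\rho_{11}=1$ and $\det\rho=\rho_{00}\rho_{11}-|\rho_{01}|^2$, the candidate inequality $1-\det\rho/\rho_{11}\geq 2|\rho_{01}|$ collapses after clearing denominators to the tautology $(\rho_{11}-|\rho_{01}|)^2\geq 0$, with strict inequality whenever $\rho_{11}<|\rho_{01}|$. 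This proves $C_w(\rho)>C_{l_1}(\rho)$ strictly throughout the complementary regime, and together with the first case also re-derives the pointwise bound $C_w\geq C_{l_1}$ promised abstractly by Lemma~\ref{lemma3}.

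For the volume equality I would adopt the Bloch parameterization $\rho_{00}=(1+u_3)/2$, $\rho_{11}=(1-u_3)/2$, $|\rho_{01}|=\sqrt{u_1^2+u_2^2}/2$ used in the theorem and take $\nu$ to be the Lebesgue (Hilbert--Schmidt) measure on the Bloch ball, which is the implicit measure of the uniform sampling in Fig.~\ref{fig2}(b). The equality region $\{C_w=C_{l_1}\}$ then coincides with the double cone $\sqrt{u_1^2+u_2^2}\leq 1-|u_3|$, which already sits inside the ball because $(1-|u_3|)^2\leq 1-u_3^2$. A one-line cylindrical integration gives
\begin{equation*}
\nu\!\left[C_w=C_{l_1}\right]=2\int_0^1\pi(1-z)^2\,dz=\frac{2\pi}{3},
\end{equation*}
exactly half of the Bloch-ball volume $4\pi/3$; hence the strict-inequality region also has volume $2\pi/3$, and the ratio equals $1$.

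The main conceptual step is recognizing that the two cases of Theorem~1 pair up perfectly with the equality/strict-inequality alternative, since the algebraic reduction to $(\rho_{11}-|\rho_{01}|)^2\geq 0$ isolates the boundary surface of the bicone exactly. I do not anticipate a genuine obstacle; the only mild subtlety worth flagging is that the numerical ratio $1$ depends on the tacit choice of $\nu$ as the Bloch-ball volume rather than some other prior on the space of mixed qubit states.
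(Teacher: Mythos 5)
Your proposal is correct and follows essentially the same route as the paper's Appendix B: split according to the two cases of the closed formula, note that equality holds identically in the regime $\rho_{00},\rho_{11}\ge|\rho_{01}|$, prove strict inequality in the complementary regime, and identify the equality set with the bicone $\sqrt{u_1^2+u_2^2}\le 1-|u_3|$, whose Lebesgue volume is exactly half that of the Bloch ball. The one point where you improve on the paper is the second regime: reducing $1-\det\rho/\min\{\rho_{00},\rho_{11}\}\ge 2|\rho_{01}|$ to the perfect square $(\rho_{11}-|\rho_{01}|)^2\ge 0$ is cleaner than the paper's convexity-and-derivative analysis of the auxiliary function $f(u_3)$, and it makes the strictness needed for the volume statement immediate.
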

\begin{proof}
See Appendix \ref{A2}.
\end{proof}

To demonstrate the power of Corollary \ref{corollary2}, we have independently generated two sets of
random qubit states (e.g.,  $10^4$ and $10^5$ qubit states respectively) and counted the number of states
``on the line'' (i.e., $C_w(\rho)=C_{l_1}(\rho)$) and ``off the line'' (i.e., $C_w(\rho)>C_{l_1}(\rho)$)
with the accuracy of $10^{-10}$, where the ratios are shown to be $4953:5047$ and $49711:50289$, respectively.

\section{MMCS and Numerical simulations}\label{sec4}
From Corollary \ref{corollary1}, one can infer that only pure qubit states achieve the maximum value of $C_w$.
However, when we consider a Hilbert space of dimension $d\geq3$, the situation is totally different.
Here we introduce the notion of \textit{mixed maximally coherent states} (MMCS), in the spirit of
mixed maximally entangled states and mixed maximally steerable states in entanglement \cite{Cavalcanti2005} and steering theories \cite{Skrzypczyk2014},
respectively. It is important to stress that a seemingly similar (but completely dissimilar) concept called
\textit{maximally coherent mixed states} (MCMS) was proposed to characterize the set of states with maximal coherence for a fixed purity \cite{Singh2015,Yao2016,Streltsov2018}.
First, we provide a necessary condition for the existence of MMCS in $d$-dimensional Hilbert space.

\begin{corollary}
In $d$-dimensional Hilbert space (e.g., $d\geq3$), a MMCS must be rank-deficient, i.e., $rank(\rho_{MMCS})\leq d-1$.
\label{corollary3}
\end{corollary}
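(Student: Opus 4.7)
The plan is to proceed by contrapositive: show that any full-rank density matrix $\rho$ must satisfy $C_w(\rho) < 1$ and therefore cannot be an MMCS.

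The central tool is the SDP reformulation in Eq.~(\ref{SDP1}), which expresses $C_w(\rho)$ as the minimum of $1 - \sum_i \lambda_i$ over nonnegative weights $\lambda_i$ subject to the matrix inequality $\rho - \sum_i \lambda_i |i\rangle\langle i| \succeq 0$. This immediately recasts the MMCS condition $C_w(\rho) = 1$ as the requirement that no nonzero tuple $(\lambda_i)_i$ of nonnegative coefficients preserves positive semidefiniteness when subtracted diagonally from $\rho$.

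With this reformulation, the main step is a one-line feasibility argument. Suppose $\rho$ has full rank and set $\mu := \lambda_{\min}(\rho) > 0$. Then $\rho \succeq \mu\openone = \sum_{i=0}^{d-1} \mu\, |i\rangle\langle i|$, so the choice $\lambda_i = \mu$ for every $i$ is feasible in the SDP, yielding $\sum_i \lambda_i = d\mu > 0$ and hence $C_w(\rho) \leq 1 - d\mu < 1$. This contradicts the MMCS assumption, so every MMCS must obey $\text{rank}(\rho) \leq d - 1$.

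The same conclusion admits an equivalent geometric reading through Lemma~\ref{lemma2}: since an MMCS has $\lambda^{\star} = 0$, the optimal decomposition collapses to $\rho = \rho_r^{\star}$, and Lemma~\ref{lemma2} places $\rho_r^{\star}$ on the boundary of $\mathcal{D}$ --- precisely the locus of rank-deficient density matrices. There is no serious obstacle to overcome here; the statement is a structural consequence of the SDP characterization combined with the elementary fact that every full-rank state dominates a positive multiple of the incoherent maximally mixed state $\openone/d$, which suffices to extract a nonzero incoherent weight and thereby strictly decrease $C_w(\rho)$ below one.
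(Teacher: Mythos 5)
Your proof is correct, and your primary argument is genuinely different from the paper's. The paper disposes of this corollary in one line by appealing to Lemma~\ref{lemma2}: when $C_w(\rho)=1$ the optimal decomposition degenerates to $\rho=\rho_r^\star$, and Lemma~\ref{lemma2} (proved geometrically in Appendix~\ref{A1}) places $\rho_r^\star$ on the boundary of $\mathcal{D}$, i.e., among the rank-deficient states --- this is exactly the ``equivalent geometric reading'' you mention at the end. Your main route instead exhibits an explicit feasible point for the SDP of Eq.~(\ref{SDP1}): for full-rank $\rho$ with $\mu=\lambda_{\min}(\rho)>0$ one has $\rho\succeq\mu\openone=\sum_i\mu|i\rangle\langle i|$, hence $C_w(\rho)\leq 1-d\mu<1$. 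This is self-contained (it does not rely on the geometric lemma or on characterizing the optimizers), and it buys a quantitative upper bound $C_w(\rho)\leq 1-d\,\lambda_{\min}(\rho)$ rather than the bare qualitative statement; the paper's route is shorter only because Lemma~\ref{lemma2} has already been established. The one small point worth making explicit is that the coefficients $\lambda_i$ in Eq.~(\ref{SDP1}) must be nonnegative (they arise as $\lambda$ times the diagonal entries of an incoherent density matrix), which your choice $\lambda_i=\mu\geq0$ satisfies, so the feasibility claim is sound.
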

\begin{proof}
From the definition of the coherence weight and Lemma \ref{lemma2}, if $C_w(\rho)=1$ then
it must be on the boundary of the set of density matrices, which implies that $\rho$ is of deficient rank.
\end{proof}

In fact, a significant class of MMCS has already been proposed although it has originally been raised for
illustrating the (ir)reversibility of the coherence theory. More precisely,
for a given state the reversibility indicates that its distillable coherence is equal to its coherence cost \cite{Winter2016}.
It is proved that this class of states satisfying reversibility requirement can only be of the following form \cite{Winter2016}
\begin{align}
\rho=\oplus_j p_j|\phi_j\rangle\langle\phi_j|,
\label{reversibility}
\end{align}
where the eigenvectors $\{|\phi_j\rangle\}$ are supported on the orthogonal subspaces spanned by a partition of the incoherent basis.
From Lemma \ref{lemma1}, we have substantially proved that the coherence weight has the property called
\textit{the additivity of coherence for subspace-independent states} \cite{Yu2016}
\begin{align}
C_w(\oplus_j p_j\rho_j)=\sum_jp_jC_w(\rho_j).
\label{additivity}
\end{align}
If we additionally require all coherence ranks $r_c(|\phi_j\rangle)\geq2$, combining Eqs. (\ref{reversibility}) and (\ref{additivity}), we obtain
\begin{align}
C_w(\oplus_j p_j|\phi_j\rangle\langle\phi_j|)=\sum_jp_jC_w(|\phi_j\rangle\langle\phi_j|)=\sum_jp_j=1.
\end{align}
Here the coherence rank $r_c$ of a pure state is defined as its number of nonzero terms in the incoherent basis \cite{Killoran2016}.
In the next theorem, we present another interesting family of the MMCS.
\begin{figure}[htbp]
\begin{center}
\includegraphics[width=0.48\textwidth]{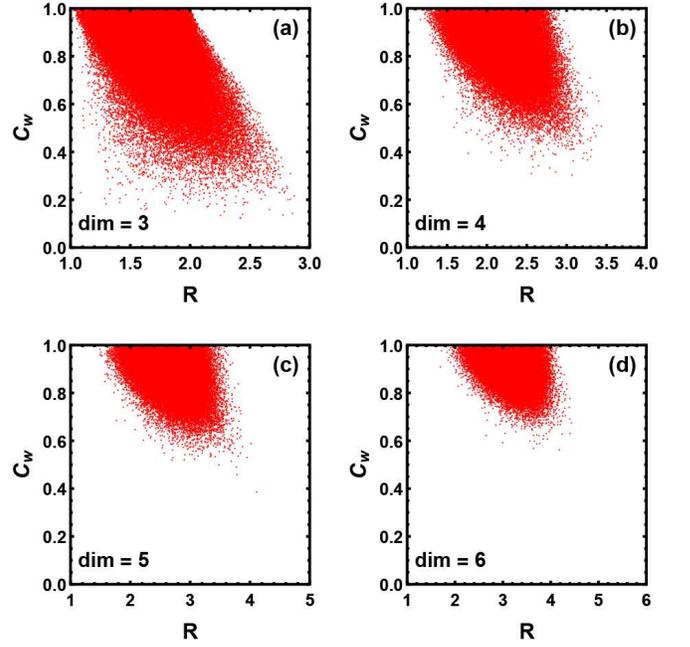}
\end{center}
\caption{(Color online) The coherence weight $C_w(\rho)$ versus the participation ratio $R(\rho)=1/\textrm{Tr}\rho^2$ for
$10^5$ randomly generated states in $d = 3, 4, 5, 6$ dimensions. For qutrit states, it is evident
that the MMCS exist only when $R(\rho)\leq2$.
}\label{fig3}
\end{figure}

\begin{theorem}
If $\rho$ is rank-deficient and an eigenvector $|\psi\rangle$ corresponding to zero eigenvalue
has full coherence rank (i.e., $|\psi\rangle=\sum_{j=0}^{d-1}\psi_j|j\rangle$ with all $\psi_j>0$),
then $\rho$ belongs to the MMCS.
\label{theorem2}
\end{theorem}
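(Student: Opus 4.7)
My plan is to argue directly from the SDP representation of the coherence weight in Eq.~(\ref{SDP1}). Since the trivial choice $\lambda_i=0$ for all $i$ is always feasible, we already have $C_w(\rho)\leq 1$, so the entire task reduces to ruling out any feasible point with $\sum_i\lambda_i>0$. The hypothesis that $\rho$ annihilates a vector $|\psi\rangle$ of full coherence rank will be exactly the tool that forces every $\lambda_i$ to vanish.

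Concretely, I would proceed as follows. First, fix any feasible $\{\lambda_i\}_{i=0}^{d-1}$ in Eq.~(\ref{SDP1}), so that $\lambda_i\geq 0$ and the residual $\sigma:=\rho-\sum_i\lambda_i|i\rangle\langle i|$ is positive semidefinite. Then test this operator on the eigenvector $|\psi\rangle=\sum_j\psi_j|j\rangle$ with $\rho|\psi\rangle=0$:
\begin{align}
0\leq\langle\psi|\sigma|\psi\rangle=\langle\psi|\rho|\psi\rangle-\sum_i\lambda_i|\psi_i|^2=-\sum_i\lambda_i|\psi_i|^2.\nonumber
\end{align}
Because every $\psi_i>0$ (full coherence rank) and every $\lambda_i\geq 0$, each summand is nonnegative, so the inequality collapses to $\lambda_i|\psi_i|^2=0$ and hence $\lambda_i=0$ for all $i$. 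This forces $\sum_i\lambda_i=0$ on every feasible point, giving $C_w(\rho)\geq 1$, and combined with the trivial upper bound yields $C_w(\rho)=1$.

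The argument is essentially a one-line certificate drawn from a carefully chosen test vector in the kernel of $\rho$, so there is no serious obstacle; the only thing requiring care is the book-keeping of constraints. One must make sure to use the version of the SDP in Eq.~(\ref{SDP1}) rather than Eq.~(\ref{SDP}), because it is there that the nonnegativity $\lambda_i\geq 0$ and the pointwise positivity of $\sigma$ enter as explicit, exploitable constraints, while the full-coherence-rank assumption on $|\psi\rangle$ is what turns the single scalar inequality $\sum_i\lambda_i|\psi_i|^2\leq 0$ into pointwise vanishing. Rank-deficiency of $\rho$ is what guarantees such a $|\psi\rangle$ exists in the first place, consistent with the necessary condition already established in Corollary~\ref{corollary3}.
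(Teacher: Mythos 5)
Your proposal is correct and uses essentially the same argument as the paper: testing positivity of the residual $\rho-\sum_i\lambda_i|i\rangle\langle i|$ against the full-coherence-rank kernel vector $|\psi\rangle$. The only (harmless) difference is that you handle an arbitrary nonnegative combination of incoherent projectors in one step, whereas the paper phrases the same computation for a single projector $\lambda_i|i\rangle\langle i|$ and invokes the equivalence that no such projector can be subtracted.
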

\begin{proof}
From Definition \ref{definition1} or directly the SDP form of Eq. (\ref{SDP1}),
we explicitly know that $\rho\in \textrm{MMCS}$ is tantamount to the condition that no incoherent projector $\{|i\rangle\langle i|\}$
can be subtracted from $\rho$ but still maintaining the positivity of the reminder.
Therefore, if we can prove that $\rho$ fulfills this equivalence condition then $\rho\in \textrm{MMCS}$.
Indeed, a square matrix $M$ is said to be positive semidefinite iff $\langle\psi|M|\psi\rangle\geq0$ for any vector $|\psi\rangle$.
Using the assumption of the theorem, we have
\begin{align}
\langle\psi\left| \ \left(\rho-\lambda_i|i\rangle\langle i|\right) \ \right|\psi\rangle=-\lambda_i\left|\langle i|\psi\rangle\right|^2<0,
\end{align}
where $\lambda_i$ is a positive parameter and $|\psi\rangle$ is chosen to be the eigenvector corresponding to zero eigenvalue.
Note that $|\langle i|\psi\rangle|>0$ for all $i=0,\ldots,d-1$ since $|\psi\rangle$ has full coherence rank.
\end{proof}

In Fig. \ref{fig3}, we have plotted the distributions of $C_w(\rho)$ according to the purities of $10^5$ randomly generated states
in $d = 3, 4, 5, 6$ dimensions, where \textit{the participation ratio} $R(\rho)=1/\textrm{Tr}\rho^2$ is introduced \cite{Zyczkowski1998}.
Intriguingly, from our numerical simulations it is found that in every dimension the state with the largest coherent weight (i.e., closest to 1)
falls into the category described in Theorem \ref{theorem2} (see Table \ref{table}).
One step further, in fact we can prove that Theorem \ref{theorem2} is a necessary and sufficient condition for qutrit states and in this case
the MMCS can only exist when the participation ratio $R(\rho)\leq2$ (see Fig. \ref{fig3} (a)).

\begin{table*}%[twocolumn]
    \renewcommand\arraystretch{1.5}
	\caption{\label{table} For $10^5$ randomly generated states, we have found the state
             with the largest coherence weight $C_w$ in every dimension and also listed
             the coressponding participation ratio $R$, the second smallest eigenvalue $\lambda_{sec}$,
             the minimum eigenvalue $\lambda_{min}$, and the coherence rank $r_c$ of the eigenvector $|\psi\rangle_{min}$  associated with $\lambda_{min}$.}	
	\begin{ruledtabular}
		\begin{tabular}{cccccc}
			$d$  &$C_w$       & $R$   &$\lambda_{sec}$  &$\lambda_{min}$         &$r_c(|\psi\rangle_{min})$      \\ \colrule
			3    &0.999992    &1.59051     &0.246295    &$9.93186\times10^{-7}$                     &3                            \\
			4    &0.999997    &2.30158     &0.129657    &$5.64599\times10^{-8}$                     &4                             \\
			5    &0.999999    &2.61211     &0.060884    &$3.99851\times10^{-8}$                     &5                            \\
			6    &0.999999    &2.91176     &0.020337    &$3.32934\times10^{-8}$                     &6
		\end{tabular}
	\end{ruledtabular}
\end{table*}

\begin{corollary}
A mixed qutrit state $\rho$ is a MMCS if and only if it is of the form
\begin{align}
\rho=\lambda_1|\psi_1\rangle\langle\psi_1|+\lambda_2|\psi_2\rangle\langle\psi_2|,
\end{align}
where $\lambda_1,\lambda_2>0$ with $\lambda_1+\lambda_2=1$, $\langle\psi_1|\psi_2\rangle=0$ and the eigenvector $|\psi_0\rangle$
associated with zero eigenvalue has full coherence rank.
Moreover, the qutrit MMCS can only exist for states with $R(\rho)\leq2$.
\end{corollary}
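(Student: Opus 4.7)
The plan is to split the biconditional and handle the bound on the participation ratio as a short addendum. The ``if'' direction is essentially free: any $\rho$ of the stated form has rank exactly $2$ in a $3$-dimensional Hilbert space, and its unique (up to phase) zero-eigenvector $|\psi_0\rangle$ has full coherence rank by hypothesis, so Theorem \ref{theorem2} immediately gives $\rho\in\mathrm{MMCS}$.

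For the ``only if'' direction, I would start with a mixed qutrit $\rho\in\mathrm{MMCS}$. By Corollary \ref{corollary3}, $\rho$ is rank-deficient; since it is mixed, $\mathrm{rank}(\rho)=2$, so its spectral decomposition is precisely $\rho=\lambda_1|\psi_1\rangle\langle\psi_1|+\lambda_2|\psi_2\rangle\langle\psi_2|$ with $\lambda_1,\lambda_2>0$, $\lambda_1+\lambda_2=1$, and $\langle\psi_1|\psi_2\rangle=0$. The kernel of $\rho$ is then the one-dimensional line spanned by $|\psi_0\rangle$, orthogonal to $|\psi_1\rangle,|\psi_2\rangle$. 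What remains is to argue that $|\psi_0\rangle$ must have full coherence rank.

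This is the step I expect to be the main obstacle, because it requires extracting concrete structural information from the single-projector subtraction problem. I would proceed by contraposition: suppose $\langle i|\psi_0\rangle=0$ for some incoherent basis vector $|i\rangle$. Then $|i\rangle\in r(\rho)$, and by Definition \ref{definition1} (equivalently the single-projector version of item (b) of Lemma \ref{lemma4}) the maximal $\Lambda\geq 0$ satisfying $\rho-\Lambda|i\rangle\langle i|\succeq 0$ is
\begin{equation}
\Lambda \;=\; \frac{1}{\langle i|\rho^{-1}|i\rangle} \;>\; 0,
\end{equation}
where $\rho^{-1}$ denotes the pseudoinverse on $r(\rho)$. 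Feeding this into the SDP characterization (\ref{SDP1}) with $\lambda_i=\Lambda$ and all other $\lambda_j=0$ yields $C_w(\rho)\leq 1-\Lambda<1$, contradicting $\rho\in\mathrm{MMCS}$. Hence every $|i\rangle$ satisfies $\langle i|\psi_0\rangle\neq 0$, i.e., $|\psi_0\rangle$ has full coherence rank, as claimed.

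Finally, the purity bound $R(\rho)\leq 2$ is a short direct calculation from the form just established: since the two nonzero eigenvalues satisfy $\lambda_1+\lambda_2=1$, Cauchy--Schwarz gives $\mathrm{Tr}\rho^2=\lambda_1^2+\lambda_2^2\geq \tfrac{1}{2}(\lambda_1+\lambda_2)^2=\tfrac{1}{2}$, so $R(\rho)=1/\mathrm{Tr}\rho^2\leq 2$, with equality precisely at $\lambda_1=\lambda_2=1/2$. This matches the numerical evidence in Fig.~\ref{fig3}(a) that the qutrit MMCS sits strictly to the left of $R=2$ (apart from the equal-weight boundary case).
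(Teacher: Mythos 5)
Your proof is correct, and the ``only if'' direction takes a genuinely different and more economical route than the paper's. The paper proceeds by an exhaustive case analysis over the coherence ranks of the two nonzero eigenvectors $|\psi_1\rangle,|\psi_2\rangle$ (five cases), invoking the additivity relation (\ref{additivity}), the absence of qubit MMCS, explicit parametrizations of orthogonal triples, and---only in the final case---the Lewenstein--Sanpera maximal-subtraction formula $\Lambda=\langle i|\rho^{-1}|i\rangle^{-1}$. You instead apply that formula at the outset, by contraposition on the coherence rank of the kernel vector: if $\langle i|\psi_0\rangle=0$ then $|i\rangle\in r(\rho)$, a strictly positive multiple of $|i\rangle\langle i|$ can be subtracted while preserving positivity, and Eq. (\ref{SDP1}) gives $C_w(\rho)\leq 1-\langle i|\rho^{-1}|i\rangle^{-1}<1$. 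This collapses the entire case analysis into one step. It also proves slightly more than the corollary asks: since any feasible $\lambda_i$ in Eq. (\ref{SDP1}) is individually dominated by the single-projector maximal value (because $\rho-\lambda_i|i\rangle\langle i|\succeq\rho-\sum_j\lambda_j|j\rangle\langle j|\succeq 0$), your argument shows that in any dimension a corank-one state is a MMCS if and only if its kernel vector has full coherence rank, i.e., it supplies the exact converse of Theorem \ref{theorem2} for rank-$(d-1)$ states rather than only for qutrits. The ``if'' direction (via Theorem \ref{theorem2}) and the bound $R(\rho)\leq 2$ are handled essentially as in the paper.
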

\begin{proof}
The ``if'' part is obviously valid owing to Theorem \ref{theorem2}. On the other hand, if $C_w(\rho)=1$ for a
\textit{mixed} qutrit state, it can be drawn from Corollary \ref{corollary3} that $\rho$ has two strictly positive eigenvalues.
We consider the spectral decomposition of $\rho$
\begin{align}
\rho=\lambda_1|\psi_1\rangle\langle\psi_1|+\lambda_2|\psi_2\rangle\langle\psi_2|,
\end{align}
where $\lambda_1,\lambda_2>0$ with $\lambda_1+\lambda_2=1$ and $\langle\psi_1|\psi_2\rangle=0$.

Without loss of generality, we assume that $r_c(|\psi_1\rangle)\leq r_c(|\psi_2\rangle)$ and
consider all possible options concerning the coherence ranks of $|\psi_1\rangle$ and $|\psi_2\rangle$:

(i) if $r_c(|\psi_1\rangle)=r_c(|\psi_2\rangle)=1$, then $\rho$ is incoherent which contradicts $C_w(\rho)=1$;

(ii) if $r_c(|\psi_1\rangle)=1,r_c(|\psi_2\rangle)=2$, from Eq. (\ref{additivity}) we have $C_w(\rho)=\lambda_2<1$
which also contradicts $C_w(\rho)=1$;

(iii) if $r_c(|\psi_1\rangle)=r_c(|\psi_2\rangle)=2$, then $\rho$ is reduced to a mixed \textit{qubit} state while
there does not exist MMCS for qubit system;

(iv) if $r_c(|\psi_1\rangle)=2,r_c(|\psi_2\rangle)=3$, a typical example can be expressed as
\begin{align}
|\psi_1\rangle&=\alpha|0\rangle+\beta|1\rangle,\nonumber\\
|\psi_2\rangle&=-\gamma\beta^\ast|0\rangle+\gamma\alpha^\ast|1\rangle+\sqrt{1-\gamma^2}|2\rangle,\nonumber
\end{align}
where $\alpha,\beta$ are non-zero complex numbers and $|\alpha|^2+|\beta|^2=1$, $0<\gamma<1$.
Since $|\psi_0\rangle$ is also orthogonal to $|\psi_1\rangle$, it must have the same form as $|\psi_2\rangle$ with $0\leq\gamma_0\neq\gamma\leq1$
\begin{align}
|\psi_0\rangle=-\gamma_0\beta^\ast|0\rangle+\gamma_0\alpha^\ast|1\rangle+\sqrt{1-\gamma_0^2}|2\rangle.\nonumber
\end{align}
The condition $\langle\psi_0|\psi_2\rangle=0$ is equivalent to
\begin{align}
\gamma\gamma_0+\sqrt{1-\gamma^2}\sqrt{1-\gamma_0^2}=0,\nonumber
\end{align}
which is impossible when $\gamma_0=0$ or $1$ (i.e., $r_c(|\psi_0\rangle)=1$ or $2$). Thus in this case we have $r_c(|\psi_0\rangle)=3$;

(v) if $r_c(|\psi_1\rangle)=r_c(|\psi_2\rangle)=3$, we first consider the probability $r_c(|\psi_0\rangle)=2$.
More precisely, if $r_c(|\psi_0\rangle)=2$, the role of $|\psi_0\rangle$ played in this case
is the same as $|\psi_1\rangle$ in the above case. Typically, we can assume $|\psi_0\rangle=\alpha|0\rangle+\beta|1\rangle$
then $|\psi_1\rangle$ and $|\psi_2\rangle$ can be expressed as ($0<\gamma_1\neq\gamma_2<1$)
\begin{align}
|\psi_1\rangle&=-\gamma_1\beta^\ast|0\rangle+\gamma_1\alpha^\ast|1\rangle+\sqrt{1-\gamma_1^2}|2\rangle,\nonumber\\
|\psi_2\rangle&=-\gamma_2\beta^\ast|0\rangle+\gamma_2\alpha^\ast|1\rangle+\sqrt{1-\gamma_2^2}|2\rangle.\nonumber
\end{align}
Next, for instance we can check the positive semi-definiteness of $\rho-x|2\rangle\langle2|$ for some strictly positive coefficient $x$.
For any vector $|\varphi\rangle$ in the three-dimensional space, $\rho-x|2\rangle\langle2|\succeq0$ is equivalent to the positivity of the following expression
\begin{align}
\langle\varphi\left| \ \left(\rho-x|2\rangle\langle2|\right) \ \right|\varphi\rangle=\langle\varphi|\rho|\varphi\rangle-x|\langle2|\varphi\rangle|^2.
\end{align}
Note that when $|\varphi\rangle$ is chosen to be $|\psi_0\rangle$, $x$ can be an arbitrary positive number.
Therefore, if we choose $x$ to be a constant within the realm of $(0,x^\star]$ where
\begin{align}
x^\star=\min_{|\varphi\rangle\neq|\psi_0\rangle}\frac{\langle\varphi|\rho|\varphi\rangle}{|\langle2|\varphi\rangle|^2}>0,
\label{bound}
\end{align}
then the positive semi-definiteness of $\rho-x|2\rangle\langle2|$ is guaranteed.
Actually, one can further infer that $x$ is upper bounded by
\begin{align}
x\leq\frac{1}{\langle2|\rho^{-1}|2\rangle}<1,
\end{align}
where the upper bound is achieved when (unnormalized) $|\varphi\rangle$ is set to be $\rho^{-1}|2\rangle$ in Eq. (\ref{bound}) \cite{Lewenstein1998}.
This implies that the projector $|2\rangle\langle2|$ can be subtracted from $\rho$ ``by some amount'' but still maintaining the positivity of the reminder,
which contradicts with the assumption $\rho\in MMCS$. Finally it can be concluded that $r_c(|\psi_0\rangle)=3$.

Moreover, the participation ratio $R(\rho)$ is restricted to
\begin{align}
R(\rho)=\frac{1}{\lambda_1^2+\lambda_2^2}\leq\frac{2}{(\lambda_1+\lambda_2)^2}=2.
\end{align}
This result is clearly illustrated in Fig. \ref{fig3} (a).
\end{proof}

\section{DISCUSSION AND CONCLUSION}\label{sec5}
In this work, we begin with the notion of BFA in a general convex resource theory,
which generalizes the concept of BSA \cite{Regula2018}. By presenting three crucial lemmas,
we have exhibited the universal properties of this resource measure.
Concentrating on the specific resource theory of coherence, an analytical formula
of coherence weight has been derived for any qubit state. In fact, the value of
coherent weight for a given qubit state relies on the relationship between the diagonal and off-diagonal elements
in the incoherent basis, which is in sharp contrast with other popular coherence monotones.
Furthermore, as another particular feature of weight-based resource measure,
we have introduced the notion of MMCS, in the spirit of
mixed maximally entangled states \cite{Cavalcanti2005} and mixed maximally steerable states \cite{Skrzypczyk2014}.
Combining with numerical simulations, we have presented two families of MMCS in arbitrary dimension (e.g. $d\geq3$)
and completely characterized the form of qutrit MMCS.

Although recently an operational interpretation has been proposed for the convex weight in general quantum resource theories \cite{Ducuara2019b,Uola2019},
there still exist many open questions, especially focusing on the computability of weight-based quantifiers.
For instance, while the coherence weight can be solved effectively by the SDP method, it is still worth making the effort
to find analytical results for high-dimensional states. Moreover, we notice that at present there is no
universal and efficient algorithm to determine the BSA of arbitrary states, mainly due to the fact that the set of separable states is not a polytope \cite{Ioannou2006}.
However, recently Lu \textit{et al.} established a separability-entanglement classifier
by using an iterative algorithm of convex hull approximation in Ref. \cite{Lu2018}.
Actually, we realize that such a method can also be adopted to compute the entanglement weight and
our numerical simulations show that it is less time-consuming than that of Ref. \cite{Akulin2015,Gabdulin2019}.
Subsequent work is already underway concerning these considerations.

\begin{acknowledgments}
Y.Y. is particularly grateful for helpful discussions with G.H. Dong.
This research is supported by Science Challenge Project (Grant No. TZ2018003) and National Natural Science
Foundation of China (NSFC) (Grants No. 11605166 and No. 61875178).
C.P.S. acknowledges financial support from NSFC (Grant No. 11534002),
NSAF (Grants No. U1930402 and No. U1930403), and National Basic Research Program of China (Grant No. 2016YFA0301201).
\end{acknowledgments}
%%%%%%%%%%%%%%%%%%%%%%%%%%%%%%%%%%%%%%%%%%%%%%%%%%%%%%%%%%%%%%%%%%%%%%%%%%%%%%%%%%%%%%%%%%%%%
\appendix

\section{Geometric interpretation of BFA}\label{A1}
The geometric interpretation is already illustrated in Fig. \ref{fig1}, where
$\rho_f\in\mathcal{F}$ and $\rho_r\in\mathcal{D}/\mathcal{F}$ can be viewed as points on
the straight line across the fixed point $\rho$.
Note that any convex decomposition in the definition of Eq. (\ref{BFA}) can be rewritten as
\begin{align}
\rho-\rho_r=\lambda(\rho_f-\rho_r).
\end{align}
By employing a valid distance metric, e.g., the Hilbert-Schmidt norm, we have
\begin{align}
\lambda=\frac{\|\rho-\rho_r\|}{\|\rho_r-\rho_f\|}=\frac{l_{AB}}{l_{AC}},\label{geometry}
\end{align}
where the lengths of the line segments of $\overline{AB}$ and $\overline{AC}$ are denoted by $l_{AB}$ and $l_{AC}$, respectively.

To approach the maximum value of the weight $\lambda$, we can simply take two steps. First, for a given state $\rho$ (i.e., the point $B$), we can fix
$\rho_r$ (i.e., the point $A$) in the interior of $\mathcal{D}$, then from Eq. (\ref{geometry}) it is obvious that $\rho_f$ should
be chosen on the boundary of $\mathcal{F}$ (i.e., the point $C$) in order to minimize the length $l_{AC}$.
In fact, if $\rho_f$ is in the interior region of $\mathcal{F}$ (i.e., the point $D$), apparently we have $l_{AD}>l_{AC}$.
Second, when $\rho_f$ is fixed on the boundary of $\mathcal{F}$, it is clear that
$\lambda$ is a monotonic increasing function of $l_{AB}$ since
\begin{align}
\lambda=\frac{l_{AB}}{l_{AC}}=\frac{l_{AB}}{l_{AB}+l_{BC}}.
\end{align}
Therefore, $\rho_r$ (i.e., the point $A$) should settle on the boundary of $\mathcal{D}$ for maximizing $l_{AB}$.

\section{Proof of corollary \ref{corollary2}}\label{A2}
\begin{figure}[htbp]
\begin{center}
\includegraphics[width=0.48\textwidth]{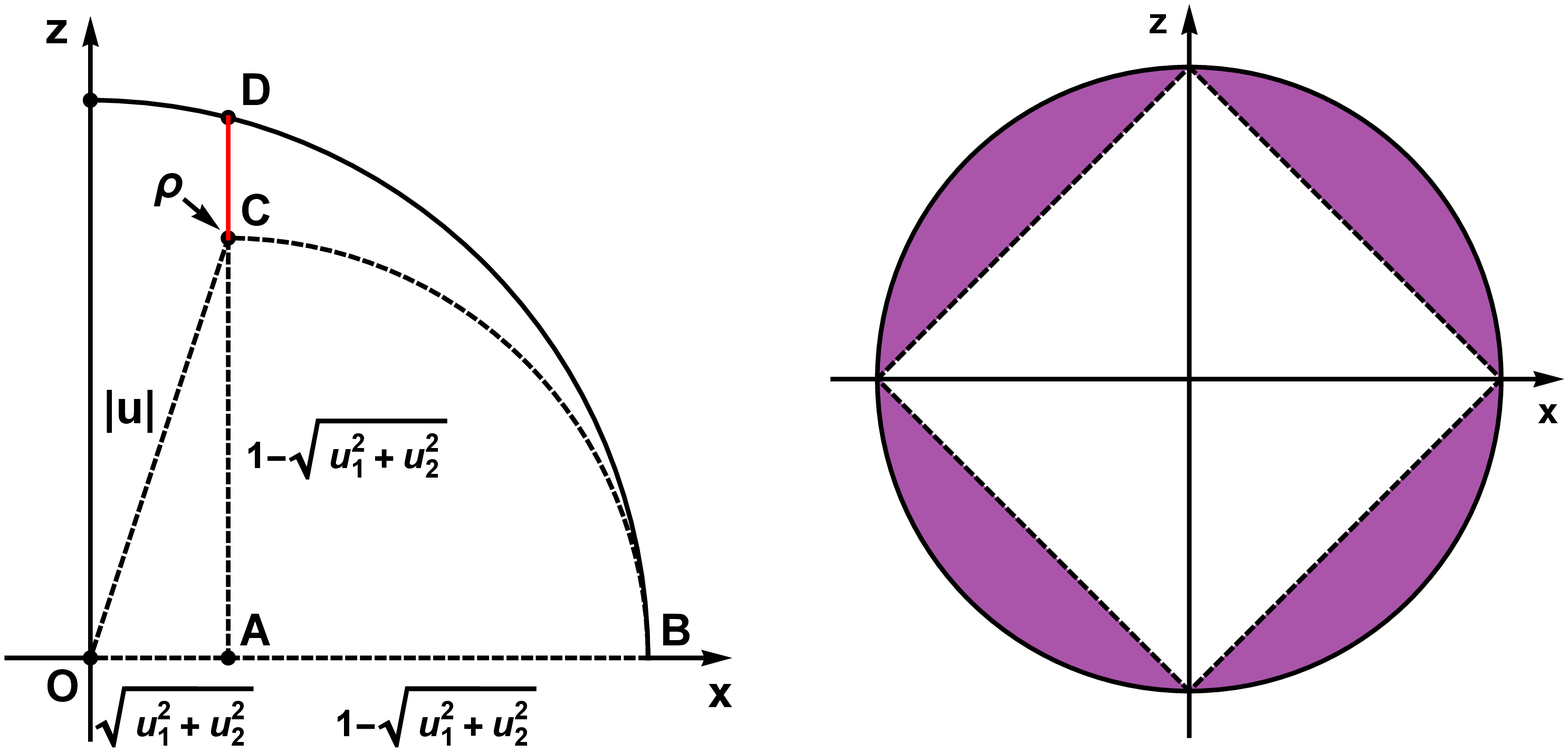}
\end{center}
\caption{(Color online) The geometric interpretation of Corollary \ref{corollary2}. The restriction $u_3\geq1-(u_1^2+u_2^2)^{1/2}$
renders the allowed state $\rho$ on the line segment $\overline{CD}$. See the main text for more descriptions.
}\label{fig4}
\end{figure}
To prove $C_w(\rho)\geq C_{l_1}(\rho)$, we only need to consider the case that
the value of $|\rho_{01}|$ lies between $\rho_{00}$ and $\rho_{11}$.
Due to the perfect rotational symmetry of the Bloch sphere, we focus on a quarter of the unit disk shown in Fig. \ref{fig4}
and assume $\rho_{00}\geq|\rho_{01}|\geq\rho_{11}$, i.e., $u_3\geq1-(u_1^2+u_2^2)^{1/2}$.
It is worth pointing out that the point A is actually the projection of $\rho$ on the $x-y$ plane in the original Bloch sphere.
Thus we have $l_{OA}=C_{l_1}(\rho)=(u_1^2+u_2^2)^{1/2}$ and if we fix the length of $l_{OA}$ then the allowed state $\rho$
with constant $C_{l_1}(\rho)$ can move on the line segment connecting the point $A$ and $D$. However, the condition
$u_3\geq1-(u_1^2+u_2^2)^{1/2}$ results in the fact that $\rho$ is further restricted on the line segment $\overline{CD}$,
where the \textit{critical} state on the point $C$ satisfies the equality $u_3=1-(u_1^2+u_2^2)^{1/2}$.

With this geometric representation in mind, now we deal with the inequality
\begin{align}
C_w(\rho)=1-\frac{1-|\vec{u}|^2}{2(1-u_3)}\geq\sqrt{u_1^2+u_2^2}=2|\rho_{01}|.
\end{align}
If we define the function of $u_3$
\begin{align}
f(u_3)=1-\frac{1-|\vec{u}|^2}{2(1-u_3)}-\sqrt{u_1^2+u_2^2},
\end{align}
then the validity of the inequality is equivalent to $f(u_3)\geq0$ for $1-(u_1^2+u_2^2)^{1/2}\leq u_3\leq[1-(u_1^2+u_2^2)]^{1/2}$.
First, note that when $\rho$ settles on the point $C$ one can easily verify that $f(u_3)=0$.
Moreover, we have
\begin{align}
f'(u_3)&=\frac{u_3}{1-u_3}-\frac{1-u_3^2-a^2}{2(1-u_3)^2},\nonumber\\
f''(u_3)&=\frac{a^2}{(1-u_3)^3}\geq0,\nonumber
\end{align}
where for simplicity we define $a=(u_1^2+u_2^2)^{1/2}$.
Since $f'(u_3=1-a)=0$ and $f''(u_3)\geq0$, $f(u_3)$ is a convex function of $u_3$ for $1-a\leq u_3\leq(1-a^2)^{1/2}$,
and thus the minimum value of $f(u_3)$ is achieved at the boundary $u_3=1-a$. Therefore we have
$f(u_3)\geq f(u_3=1-a)=0$.

On the other hand, when $l_{OA}=(u_1^2+u_2^2)^{1/2}$ varies the line segment $\overline{CD}$ evolves into
the purple shaded regions of the Bloch sphere, implying that the qubit states satisfying $C_w(\rho)>C_{l_1}(\rho)$
occupy such regions (see Fig. \ref{fig4}). By use of the formula of the volume of a cone, we obtain
\begin{align}
\frac{\nu\left[C_w(\rho)>C_{l_1}(\rho)\right]}{\nu\left[C_w(\rho)=C_{l_1}(\rho)\right]}=
\frac{\frac{4}{3}\pi r^3-2\times\frac{1}{3}\pi r^2h}{2\times\frac{1}{3}\pi r^2h}=1,
\end{align}
where the height of the cone $h$ is equal to the radius $r$ of the Bloch sphere.
The proof is complete.

%%%%%%%%%%%%%%%%%%%%%%%%%%%%%%%%%%%%%%%%%%%%%%%%%%%%%%%%%%%%%%%%%%%%%%%%%%%%%%%%%%%%%%%%%%%%%

\end{document}